\definecolor{winered}{rgb}{0.7,0,0}
\definecolor{lessblue}{rgb}{0,0,0.7}
\def\@tocline#1#2#3#4#5#6#7{
\begingroup
  \par
    \parindent\z@ \leftskip#3 \relax \advance\leftskip\@tempdima\relax
                  \rightskip\@pnumwidth plus 4em \parfillskip-\@pnumwidth
    \ifcase #1 
       \vskip 0.6em \hskip 0em 
       \or
       \or \hskip 0em 
       \or \hskip 1em 
    \fi%
    %
    #6
    %
    \nobreak\relax{\leavevmode\leaders\hbox{\,.}\hfill}
    \hbox to\@pnumwidth {\@tocpagenum{#7}}
  \par
\endgroup
}
 \def\l@section{\@tocline{0}{0pt}{0pc}{}{}}
\renewcommand{\tocsection}[3]{%
  \indentlabel{\@ifnotempty{#2}{ 
    \ignorespaces\bfseries{#2. #3}}}
  \indentlabel{\@ifempty{#2}{\ignorespaces\bfseries{#3}}{}} 
    \vspace{1.5pt}}
\renewcommand{\tocsubsection}[3]{%
  \indentlabel{\@ifnotempty{#2}{
    \ignorespaces#2. #3}}
  \indentlabel{\@ifempty{#2}{\ignorespaces #3}{}}
    \vspace{1.5pt}}
\renewcommand{\tocsubsubsection}[3]{%
  \indentlabel{\@ifnotempty{#2}{
    \ignorespaces#2. #3}}
  \indentlabel{\@ifempty{#2}{\ignorespaces #3}{}}
    \vspace{1.5pt}}
\def\@nomenstarted{0}
\newlength{\@nomenoldtabcolsep}
\newcommand{\nomenstart}
  {%
    \def\@nomenstarted{1}%
    \setlength{\@nomenoldtabcolsep}{\tabcolsep}%
    \setlength{\tabcolsep}{3.5pt}%
    \begin{longtable}{p{0.11\textwidth} p{0.86\textwidth}}
  }
\newcommand{\nomenitem}[2]{%
    \ifcase\@nomenstarted%
      \or 
      \or \\ 
    \fi%
    #1\,{\leavevmode\leaders\hbox{\,.}\hfill} & #2%
    \def\@nomenstarted{2}%
  }%
\newcommand{\nomenend}
  {\\%
      \end{longtable}%
      \setlength{\tabcolsep}{\@nomenoldtabcolsep}%
      \def\@nomenstarted{0}%
  }
\numberwithin{equation}{section}
\newtheorem{thm}{Theorem}[section]
\newtheorem{prop}[thm]{Proposition}
\newtheorem{lemma}[thm]{Lemma}
\newtheorem*{thm*}{Theorem}
\newtheorem*{prop*}{Proposition}
\newtheorem*{cor*}{Corollary}
\newtheorem*{conj*}{Conjecture}
\theoremstyle{definition}
\theoremstyle{remark}
\newcommand{\mc}{\mathcal}
\newcommand{\cC}{\mc C}
\newcommand{\cH}{\mc H}
\newcommand{\cL}{\mc L}
\newcommand{\cU}{\mc U}
\newcommand{\cV}{\mc V}
\newcommand{\cW}{\mc W}
\newcommand{\ms}{\mathscr}
\newcommand{\sR}{\ms R}
\newcommand{\C}{\mathbb{C}}
\newcommand{\N}{\mathbb{N}}
\newcommand{\R}{\mathbb{R}}
\newcommand{\Z}{\mathbb{Z}}
\newcommand{\Sph}{\mathbb{S}}
\newcommand{\bfzero}{\mathbf{0}}
\newcommand{\bfa}{\mathbf{a}}
\newcommand{\bfB}{\mathbf{B}}
\newcommand{\bfE}{\mathbf{E}}
\renewcommand{\Im}{\operatorname{Im}}
\newcommand{\Id}{\operatorname{Id}}
\newcommand{\tr}{\operatorname{tr}}
\newcommand{\eps}{\epsilon}
\newcommand{\hra}{\hookrightarrow}
\newcommand{\ol}{\overline}
\newcommand{\pa}{\partial}
\newcommand{\wt}{\widetilde}
\newcommand{\bop}{{\mathrm{b}}}
\newcommand{\rcTbdual}[1][]{\ensuremath{\overline{{}^{\bop}T^*\ifthenelse{\isempty{#1}}{}{_#1}}}}
\newcommand{\bhm}[1][]{\ensuremath{M_{\bullet\ifthenelse{\isempty{#1}}{}{,#1}}}}
\newcommand{\CI}{\cC^\infty}
\newcommand{\cCb}{\mc C_\bop}
\newcommand{\CIb}{\cC^\infty_\bop}
\newcommand{\Ric}{\mathrm{Ric}}
\newcommand{\openbigpmatrix}[1]
  {%
    \def\@bigpmatrixsize{#1}%
    \addtolength{\arraycolsep}{-#1}%
    \begin{pmatrix}%
  }
\newcommand{\closebigpmatrix}
  {%
    \end{pmatrix}%
    \addtolength{\arraycolsep}{\@bigpmatrixsize}%
  }
\newcommand{\itref}[1]{(\ref{#1})}
\newcommand{\inclfig}[1]{\includegraphics{#1}}
\begin{document}

\title[Uniqueness of Kerr--Newman--de~Sitter]{Uniqueness of Kerr--Newman--de~Sitter black holes with small angular momenta}

\author{Peter Hintz}
\address{Department of Mathematics, University of California, Berkeley, CA 94720-3840, USA}
\email{phintz@berkeley.edu}

\date{February 16, 2017. Final revision: November 24, 2017.}

\subjclass[2010]{Primary 83C57, Secondary 83C22}

\begin{abstract}
  We show that a stationary solution of the Einstein--Maxwell equations which is close to a non-degenerate Reissner--Nordstr\"om--de~Sitter solution is in fact equal to a slowly rotating Kerr--Newman--de~Sitter solution. The proof uses the non-linear stability of the Kerr--Newman--de~Sitter family of black holes with small angular momenta, recently established by the author, together with an extension argument for Killing vector fields. Our black hole uniqueness result only requires the solution to have high but finite regularity; in particular, we do not make any analyticity assumptions.
\end{abstract}

\maketitle

\section{Introduction}
\label{SecIntro}

Let $(M,g,F)$ be a smooth stationary solution of the Einstein--Maxwell system
\begin{equation}
\label{EqIntroEM}
  \Ric(g) + \Lambda g = 2 T(g,F), \quad d F=0, \quad \delta_g F=0,
\end{equation}
with $M$ a 4-manifold, $g$ a Lorentzian metric with signature $(+,-,-,-)$, and $F$ a 2-form on $M$; here $T(g,F)_{\mu\nu}=-F_{\mu\lambda}F_\nu{}^\lambda+\frac{1}{4}g_{\mu\nu}F_{\kappa\lambda}F^{\kappa\lambda}$ is the electromagnetic energy-momentum tensor, and $\Lambda$ is the cosmological constant, which we will take to be \emph{positive} and fixed. If $(M,g,F)$ is close to a non-degenerate Reissner--Nordstr\"om--de~Sitter (RNdS) black hole in a neighborhood of the domain of outer communications (more precisely, close to the future development of RNdS initial data on a spacelike hypersurface intersecting and extending a bit past the future event and cosmological horizons), we show that $(M,g,F)$ is isometric to a slowly rotating Kerr--Newman--de~Sitter (KNdS) black hole. We deduce this from the full non-linear stability result proved by the author in \cite{HintzKNdSStability}. To state our theorem, we introduce
\begin{gather*}
  M_\eps = [0,\infty)_{t_*} \times (r_0+\eps,r_1-\eps)_r \times \Sph^2, \\
  \Sigma_\eps = \{0\} \times (r_0+\eps,r_1-\eps) \times \Sph^2 \subset M_\eps,
\end{gather*}
with $0<r_0<r_1$ specified below, and
\[
  M = M_0,\quad \Sigma=\Sigma_0.
\]
Denote the parameters of a KNdS black hole by $b=(\bhm,\bfa,Q_e,Q_m)$, where $\bhm>0$ is its mass, $\bfa\in\R^3$ its angular momentum per unit mass, and $Q_e,Q_m\in\R$ are its electric and magnetic charge. Let $b_0=(\bhm[0],\bfzero,Q_{e,0},Q_{m,0})$ denote the parameters of a fixed non-degenerate RNdS black hole, see \cite[Definition~3.1]{HintzKNdSStability}, which thus has an event horizon at $r=r_{b_0,-}$ and a cosmological horizon at $r=r_{b_0,+}>r_{b_0,-}$. We choose $r_0$ and $r_1$ such that
\[
  r_0 + 8\rho < r_{b_0,-} < r_{b_0,+} < r_1 - 8\rho
\]
for a fixed small $\rho>0$, and such that moreover the Cauchy horizon (which exists if $Q_e^2+Q_m^2>0$) of the RNdS black hole lies in $r<r_0-\rho$. We only consider parameters $b$ which are close to $b_0$. The KNdS family, recalled below, is then a smooth family $(g_b,F_b)$ of solutions of \eqref{EqIntroEM} on the fixed manifold $M$; the vector field $\pa_{t_*}$ is Killing in the sense that $\cL_{\pa_{t_*}}g_b=0$ and $\cL_{\pa_{t_*}}F_b=0$ for all $b$. We assume that the event, resp.\ cosmological, horizon of $g_b$ is located at $r=r_{b,-}$, resp.\ $r=r_{b,+}$, with $r_{b,\pm}\in(r_0+8\rho,r_1-8\rho)$, and that the Cauchy horizon of $g_b$ is located in $r<r_0-\rho$; this holds for $b$ sufficiently close to $b_0$.

Lastly, if $\psi\colon\Sigma_\eps\to M$ is any smooth map, so $\psi(x)=(\psi^T(x),\psi^X(x))$ with $\psi^T$ real-valued and $\psi^X$ taking values in $(r_0,r_1)\times\Sph^2$, we define its \emph{stationary extension} to be the map $\wt\psi\colon[0,\infty)_{t_*}\times\Sigma_\eps \to M$,
\[
  \wt\psi(t_*,x) = (\psi^T(x)+t_*,\psi^X(x)),\quad x\in\Sigma_\eps.
\]

\begin{thm}
\label{ThmIntroUniq}
  Suppose $(g,F)$ is a smooth solution of the Einstein--Maxwell system \eqref{EqIntroEM} on $M$ which is stationary with respect to $\pa_{t_*}$, i.e.\ $\cL_{\pa_{t_*}}(g,F)=0$. Suppose moreover that $(g,F)$ is close to $(g_{b_0},F_{b_0})$ in the topology of $\cC^m$ for some large fixed $m$. Then there exist parameters $b$ close to $b_0$ and a smooth hypersurface $\Sigma'\subset M$, together with a smooth map $\psi\colon\Sigma_{7\rho}\to\Sigma'\subset M$ $\rho$-close in $\cC^0$ to the map $(t_*,x)\mapsto(t_*+1,x)$ such that for the stationary extension $\wt\psi$ of $\psi$, we have
  \[
    \wt\psi^*(g_b,F_b) = (g,F).
  \]
\end{thm}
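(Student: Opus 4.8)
The plan is to reduce the theorem to a rigidity statement and resolve it using the stability result: a stationary solution, once written in the generalized wave coordinate gauge relative to a Kerr--Newman--de~Sitter background, will be seen to be a \emph{time-independent} perturbation of that background, and the nonlinear stability theorem of \cite{HintzKNdSStability} forces such a perturbation to decay in $t_*$, hence to vanish.

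First I would pass to Cauchy data: since $\Sigma\subset M$ is spacelike and crosses both horizons, $(g,F)$ induces on it initial data for the Einstein--Maxwell system which is $\cC^{m-1}$-close to the RNdS data of $(g_{b_0},F_{b_0})$, so that for $m$ large the hypotheses of the stability theorem of \cite{HintzKNdSStability} apply. That theorem produces KNdS parameters $b$ close to $b_0$ and, after slightly shrinking $(r_0,r_1)$ to fit inside the domain of dependence of $\Sigma$, a representation of the maximal development as $(g^\flat,F^\flat)=(g_b+\wt h,\,F_b+\wt f)$ on $M$ in the wave-coordinate/Lorenz gauge relative to $(g_b,F_b)$, with exponential decay $\|(\wt h,\wt f)\|_{\cC^{m'}(\{t_*=s\})}\le C e^{-\alpha s}$ for some $\alpha>0$ and some $m'$ with $1\ll m'<m$; the gauge is realized by a diffeomorphism $\Phi$ (a wave map from $(M,g)$ into $(M,g_b)$, pinned down by a normalization along the initial slice) with $\Phi^*(g^\flat,F^\flat)=(g,F)$, and $\Phi$ is $\cC^0$-close to $(t_*,x)\mapsto(t_*+1,x)$ once the axis of $b$ is chosen appropriately.

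The key step is then to observe that stationarity is preserved by this gauge-fixing. Both $g$ and $g_b$ are invariant under the flow of $\pa_{t_*}$, so the wave map equation defining $\Phi$ has coefficients invariant under the simultaneous $\pa_{t_*}$-flows on domain and target, and one may fix its normalization along the initial slice $\pa_{t_*}$-equivariantly; uniqueness in the Cauchy problem for the wave map equation then forces $\Phi$ to intertwine the two flows, i.e.\ $\Phi$ is the stationary extension of $\Phi|_{\{t_*=0\}}$, so $\Phi_*\pa_{t_*}=\pa_{t_*}$. Hence $\cL_{\pa_{t_*}}g^\flat=(\Phi^{-1})^*\cL_{\pa_{t_*}}g=0$, so $\wt h=g^\flat-g_b$ is independent of $t_*$; comparing with $\|\wt h\|_{\cC^{m'}(\{t_*=s\})}\le Ce^{-\alpha s}$ for every $s$ gives $\wt h\equiv0$, and likewise $\wt f\equiv0$. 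Thus $\Phi^*(g_b,F_b)=(g,F)$. Setting $\psi:=\Phi|_{\Sigma_{7\rho}}$ and $\Sigma':=\psi(\Sigma_{7\rho})$, the stationary-extension form of $\Phi$ gives $\Phi=\wt\psi$ and hence $\wt\psi^*(g_b,F_b)=(g,F)$; the $\cC^0$-closeness of $\Phi$ to $(t_*,x)\mapsto(t_*+1,x)$, together with the slack built into the choice of $7\rho$, ensures that $\psi$ maps into $M$, that $\Sigma'$ is a smooth embedded hypersurface, and that $\psi$ is $\rho$-close in $\cC^0$ to the stated map. Smoothness of $\psi$ is then automatic: with $\wt h,\wt f$ gone, $\Phi$ is an isometry of the smooth metrics $g$ and $g_b$ intertwining their Maxwell fields.

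The main obstacle is securing the equivariance claim: one must check that the gauge-fixing of \cite{HintzKNdSStability}, which is adapted to the initial value problem, can indeed be performed $\pa_{t_*}$-equivariantly without spoiling the exponential decay of $\wt h,\wt f$ — this is the role of the extension argument for Killing vector fields. If one prefers to treat the stability theorem as a black box, the same conclusion can be reached by noting that $\Phi$, relating two metrics both $\cC^m$-close to $g_{b_0}$, is $\cC^0$-close to an isometry of $g_{b_0}$ composed with a time translation, and every isometry of $g_{b_0}$ commutes with $\pa_{t_*}$; thus $V:=\Phi_*\pa_{t_*}$ is a Killing field of $g^\flat$ with $V-\pa_{t_*}\to0$ exponentially in $t_*$ (the gauge settling to its final form), and one can straighten $V$ to $\pa_{t_*}$ by a bounded diffeomorphism that is asymptotically a time translation, which preserves the decay and reduces matters to the equivariant case. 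The remaining points — shrinking $(r_0,r_1)$ so that $\Phi$ and $\wt\psi$ stay inside $M$ and the development of $\Sigma$ covers the stationary extension of $\Sigma_{7\rho}$, and verifying that $\cC^m$-closeness leaves enough regularity to run the stability theorem — are routine but must be tracked.
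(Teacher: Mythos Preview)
Your overall strategy---use the stability theorem to write the development as a decaying perturbation of some $(g_b,F_b)$, then exploit stationarity to kill the perturbation---matches the paper's. But the two routes you offer for the key step both have gaps.

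The direct equivariance argument does not go through as stated. The gauge in \cite{HintzKNdSStability} is not simply the wave map gauge relative to $g_b$: the parameter $b$ is unknown a priori and is determined by the iteration, the DeTurck term is modified by constraint damping, and there are finite-dimensional gauge source corrections. None of these is manifestly $\pa_{t_*}$-equivariant, and the Cauchy data for the gauged metric (lapse, shift, transverse derivative) are fixed by the scheme rather than freely chosen. So you cannot simply assert that $\Phi$ intertwines the $\pa_{t_*}$-flows; making this work would amount to re-proving the stability theorem in a stationary gauge. You flag this yourself, which is good, but then your fallback must carry the full weight.

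Your black-box alternative is essentially what the paper does, but you have hand-waved precisely the substantive step. You assert $V-\pa_{t_*}\to 0$ exponentially ``the gauge settling to its final form''; that is not an argument, and in fact it is not quite the right conclusion. What the paper proves is that $V$, extended globally as a Killing field by solving $(\Box_{g'}-\Ric(g'))V=0$ (this is where the Killing extension argument actually enters---not to fix the gauge, but to make $V$ globally defined without knowing $\Phi$ is a global diffeomorphism), has a resonance expansion whose only surviving non-decaying part is a Killing field of $g_b$, hence $V=C\pa_{t_*}+R+\wt V$ with $\wt V\in e^{-\alpha t_*}\CIb$, $C$ close to $1$, and $R$ a small rotation. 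This uses the resonance/normal-operator machinery for b-wave equations on the compactified spacetime and the classification of Killing fields of $g_b$. Only then does one ``straighten'' $V$: after rescaling, the flow $\phi_s$ of $V$ differs from the flow $\phi_s^0$ of $\pa_{t_*}+R$ by a map $\psi_s$ that converges as $s\to\infty$ (an ODE estimate using the exponential decay of $\wt V$), and the limit $\psi_\infty$ gives $(g',F')=\psi_\infty^*(g_b,F_b)$. Your sketch names the right objects but skips both the asymptotic analysis of $V$ and the convergence of the straightening map, which are the two lemmas that make the argument work.
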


See Figure~\ref{FigIntroUniq}. Here, we define $\cC^m$ using the embedding $M\hra\R^4$ by means of polar coordinates. If one merely assumes $(g,F)$ to have fixed but high regularity, the conclusion remains valid, only with $\psi$ having finite (but high) regularity only. The conclusion cannot be strengthened: If $\psi\colon\Sigma_\eps\to M$, $\eps\geq 0$ small, is any smooth map, close to the time $1$ translation map, with stationary extension $\wt\psi$, then $(g,F)=\wt\psi^*(g_b,F_b)$ satisfies the assumptions of the theorem (after slightly shrinking the size of the interval $[r_0,r_1]$ and thus of $M$), and the conclusion holds for $\psi$ and $\Sigma'=\psi(\Sigma)$.

\begin{figure}[!ht]
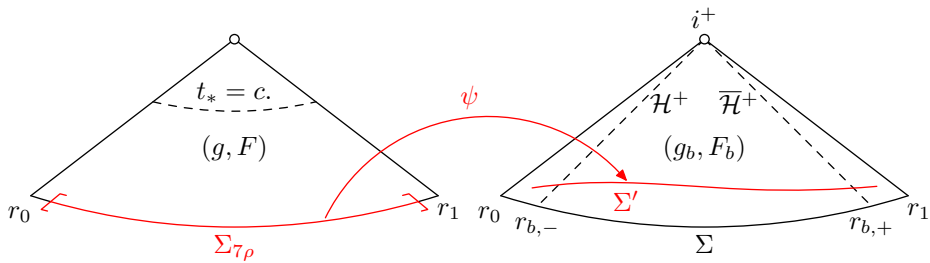

\centering
\inclfig{IntroUniq-small}
\caption{\textit{Left:} The given stationary solution $(g,F)$, drawn in a Penrose diagrammatic fashion together with a level set of $t_*$. \textit{Right:} Penrose diagram of a KNdS solution $(g_b,F_b)$, with event horizon $\cH^+$, cosmological horizon $\ol\cH{}^+$, and future infinity $i^+$. \textit{Red:} The stationary extension $\wt\psi$ of the map $\psi\colon\Sigma_{7\rho}\to\Sigma'$ pulls the KNdS solution back to the given one.}
\label{FigIntroUniq}
\end{figure}

In Boyer--Lindquist coordinates, a KNdS solution $(g_b,F_b)$ with magnetic charge $Q_m=0$ and $a=|\bfa|$ takes the form
\begin{align*}
  g_b &= -\rho_b^2\Bigl(\frac{dr^2}{\wt\mu_b}+\frac{d\theta^2}{\kappa_b}\Bigr) + \frac{\wt\mu_b}{(1+\lambda_b)^2\rho_b^2}(dt-a\sin^2\theta\,d\phi)^2 \\
    &\qquad - \frac{\kappa_b\sin^2\theta}{(1+\lambda_b)^2\rho_b^2}(a\,dt-(r^2+a^2)\,d\phi)^2,
\end{align*}
where we use polar coordinates $\phi\in\R/2\pi\Z$ and $\theta\in(0,\pi)$ on $\Sph^2$, and
\begin{gather*}
  \lambda_b=\frac{\Lambda a^2}{3},\quad \kappa_b=1+\lambda\cos^2\theta, \quad \rho_b^2=r^2+a^2\cos^2\theta, \\
  \wt\mu_b=(r^2+a^2)\Bigl(1-\frac{\Lambda r^2}{3}\Bigr)-2\bhm r+(1+\lambda_b)^2 Q_e^2;
\end{gather*}
the radii $r_{b,-}<r_{b,+}$ are the two largest roots of $\wt\mu_b$. The 4-potential is
\[
  \breve A_b = -\frac{Q_e r}{\rho_b^2}(dt-a\sin^2\theta\,d\phi),
\]
and the electromagnetic tensor is correspondingly given by $F_b=d\breve A_b$. These expressions for $g_b$ and $F_b$ are singular at the horizons $r_{b,\pm}$, but a simple change of coordinates removes these singularities, giving an extension of $(g_b,F_b)$ to $M$; see \cite[\S3.2]{HintzKNdSStability} for details. KNdS solutions with non-zero magnetic charge can be constructed from magnetically uncharged solutions using the electric-magnetic duality, as described in \cite[\S3.3]{HintzKNdSStability}.

Theorem~\ref{ThmIntroUniq} is a rigidity theorem for stationary black holes in the Einstein--Maxwell theory in the \emph{smooth} category and \emph{without symmetry assumptions}. In the case of $\Lambda=0$, rigidity of the Kerr--Newman family was proved by Wong in his thesis \cite{WongThesisKerrNewman} --- conditional on a technical assumption, see \cite[item (T) in \S4.4]{WongThesisKerrNewman} --- using a tensorial characterization of Kerr--Newman black holes \cite{WongKerrNewmanChar}. These results built on the work on rigidity theorems for Kerr spacetimes, in the context of the Einstein vacuum equations and under smallness assumptions on the angular momentum or the Mars--Simons tensor, by Alexakis--Ionescu--Klainerman \cite{AlexakisIonescuKlainermanKerrUniquenessSmallMarsSimons,AlexakisIonescuKlainermanUniqueness}, which in turn relied on earlier work by Ionescu--Klainerman \cite{IonescuKlainermanUniqueness}. See also \cite{WongYuKerrNewmanNoMult}. These works make certain geometric assumptions in particular on the structure of the future and past event horizons. While we do not explicitly make such assumptions in the statement of Theorem~\ref{ThmIntroUniq}, they are in fact automatically satisfied, as a careful analysis of the solution produced by the main theorem of \cite{HintzKNdSStability} shows; this will be discussed elsewhere.

Earlier uniqueness results for the Kerr spacetime by Hawking and Ellis \cite{HawkingEllis}, Carter \cite{CarterAxisymmetry}, and Robinson \cite{RobinsonKerrUniqueness}, as well as later work by Mazur \cite{MazurKerrNewmanUniqueness} on the rigidity of Kerr--Newman black holes, relied on the restrictive assumption of real analyticity of the metric and the electromagnetic field. We refer the reader to the reviews \cite{RobinsonBlackHoleUniquenessReview,ChruscielCostaHeuslerStationaryBH} for further results and references.

Proving uniqueness for the \emph{linearized} Einstein(--Maxwell) equations amounts to performing a mode stability analysis at $0$ frequency, which was done by Vishveshwara \cite{VishveshwaraSchwarzschild}, Moncrief \cite{MoncriefRNGaugeInv}, and Kodama \cite{KodamaUniqueness} in various contexts. Via the non-linear stability theorem, a key ingredient of our proof of Theorem~\ref{ThmIntroUniq} is the analysis of (generalized) $0$ frequency modes for linear perturbations of non-degenerate RNdS black holes in the spirit of the work by Kodama--Ishibashi \cite{KodamaIshibashiMaster,KodamaIshibashiCharged} on perturbations of Schwarzschild--de~Sitter and RNdS spacetimes; we do however need the full strength of \cite{HintzKNdSStability} in the argument presented here.

Using the Kerr--de~Sitter (KdS) stability result with Vasy \cite{HintzVasyKdSStability}, our arguments also prove the uniqueness of slowly rotating KdS black holes. Furthermore, as pointed out in \cite[Remark~1.2]{HintzKNdSStability}, non-linear stability for perturbations of slowly rotating KdS or KNdS black holes also holds for the Einstein(--Maxwell) equations coupled to a massless or massive scalar field. Correspondingly, perturbative uniqueness results, analogous to Theorem~\ref{ThmIntroUniq}, hold for such black holes. (When coupling to a massive scalar field, the restriction to small angular momenta in Theorem~\ref{ThmIntroUniq} is likely necessary --- with the smallness depending on the mass of the scalar field ---, as the construction of time-periodic perturbations of Kerr black holes by Chodosh and Shlapentokh-Rothman \cite{ChodoshShlapentokhRothmanPeriodic} suggests.)

We close this introduction by recalling the non-linear stability result for KNdS black holes \cite{HintzKNdSStability}: We take $\Sigma=\{t_*=0\}$ as the Cauchy surface for the formulation of the initial value problem of the Einstein--Maxwell system. We denote initial data on $\Sigma$ by $(h,k,\bfE,\bfB)$, where $h$ is a Riemannian metric on $\Sigma$, $k$ a symmetric 2-tensor, and $\bfE$ and $\bfB$ are 1-forms, assumed to be smooth. A metric $g$ and a 2-form $F$ on $M$, satisfying the Einstein--Maxwell system \eqref{EqIntroEM}, are said to induce these data provided $h=-i^*g$, with $i\colon\Sigma\hra M$ the inclusion, and $k$ are the metric and second fundamental form of $\Sigma$ induced by $g$, respectively, and if for the future unit normal $N$ to $\Sigma$, we have $\bfE=-i^*i_N F$ and $\bfB=\star_h i^* F$. The \emph{constraint equations} are necessary and sufficient conditions for a local solution of this initial value problem to exist; see \cite[equations~(2.14)--(2.15)]{HintzKNdSStability} for their form in the present context.

\begin{thm}
\label{ThmIntroStab}
  (See \cite[Theorem~9.2 and Remark~9.3]{HintzKNdSStability}.) Fix the parameters $b_0$ of a non-degenerate RNdS spacetime. Then there exists $\alpha>0$ such that the following holds: Suppose $(h,k,\bfE,\bfB)$ are smooth initial data on $\Sigma$, satisfying the constraint equations, which are close, in the topology of $H^{21}$, to the initial data induced by the non-degenerate RNdS solution $(g_{b_0},F_{b_0})$. Then there exist a smooth global forward solution $(g,F)$ of the Einstein--Maxwell system, attaining the given initial data on $\Sigma$, and KNdS parameters $b=(\bhm,\bfa,Q_e,Q_m)$ close to $b_0$, such that
  \[
    g = g_b + \wt g, \quad A = A_b + \wt A,
  \]
  where $\wt g\in e^{-\alpha t_*}\CIb(M;S^2 T^*M)$ and $\wt F\in e^{-\alpha t_*}\CIb(M;\Lambda^2 T^*M)$ are exponentially decaying. Moreover, $b$, $\wt g$ and $\wt F$ depend continuously on $(h,k,\bfE,\bfB)$.
\end{thm}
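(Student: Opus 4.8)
The plan is to reformulate the Einstein--Maxwell system \eqref{EqIntroEM} as a quasilinear hyperbolic evolution problem by fixing the diffeomorphism and electromagnetic gauge freedom, and then to solve this evolution problem globally by a Nash--Moser iteration whose linearized steps are controlled by a detailed spectral (mode stability) analysis of the linearization at the RNdS solution. The positive cosmological constant is exploited throughout: it produces the red-shift at the horizons and the spectral gap that together drive the exponential decay $e^{-\alpha t_*}$.

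First I would fix gauges: a generalized wave coordinate (DeTurck) gauge for the metric relative to the background $g_{b_0}$, and a Lorenz-type gauge for the $4$-potential $A$. Writing $g=g_b+\wt g$ and $A=A_b+\wt A$, the gauge-fixed operator replaces $\Ric(g)+\Lambda g-2T$ and $\delta_g F$ by principally scalar wave operators in $(\wt g,\wt A)$. To guarantee that solutions of the gauge-fixed system in fact solve \eqref{EqIntroEM}, I would include constraint-damping modifications (in the spirit of Gundlach et al.) so that the gauge/constraint $1$-forms satisfy a damped wave equation with decaying solutions; since the geometric data $(h,k,\bfE,\bfB)$ satisfy the constraints, the gauge conditions can be arranged to hold initially, and the damped propagation keeps the constraints satisfied for all $t_*$. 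Constructing Cauchy data for the gauge-fixed system from $(h,k,\bfE,\bfB)$ and imposing the gauge conditions on $\Sigma$ then reduces the theorem to a global existence and decay statement for the gauge-fixed equation with the given initial data.

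The heart of the matter is the linearized operator $L_{b_0}$ about $(g_{b_0},F_{b_0})$. Using Vasy's microlocal Fredholm framework, I would analyze $L_{b_0}$ on b-Sobolev spaces on $M$: extend the operator across the event and cosmological horizons (which become generalized radial sets), establish radial-point estimates there with the correct threshold regularities, control the normally hyperbolic trapping at the photon sphere via the Wunsch--Zworski/Dyatlov estimates, and combine these with high-energy (semiclassical) resolvent bounds to obtain a meromorphic resolvent family in the spectral parameter $\sigma$ dual to $t_*$. The crucial input is \emph{mode stability}: I must show that in the closed upper half-plane $\Im\sigma\ge 0$ the only resonances of $L_{b_0}$ lie at $\sigma=0$, and that the corresponding space of (generalized) zero modes is spanned by pure gauge solutions together with the linearized KNdS family $\partial_b(g_b,A_b)|_{b=b_0}$. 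This is where a Kodama--Ishibashi-type analysis of $0$-frequency perturbations of RNdS enters, decomposed into scalar, vector, and tensor spherical-harmonic sectors with careful tracking of the coupling between gravitational and electromagnetic perturbations; ruling out all non-decaying physical modes except those captured by the finite-dimensional parameter $b$ is the main obstacle.

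Granting mode stability, a contour-deformation (resonance expansion) argument shows that for data whose $\sigma=0$ content is cancelled by an appropriate choice of parameters and a finite-dimensional gauge adjustment, the forward solution of the linearized gauge-fixed equation decays like $e^{-\alpha t_*}$ in $\CIb$ for some $\alpha>0$ below the spectral gap; the derivative loss inherent in the trapping estimate accounts for the finite ($H^{21}$) closeness threshold. Finally, I would run a Nash--Moser iteration on the nonlinear gauge-fixed map: at each Newton step I solve the linearized equation and simultaneously adjust the black-hole parameters $b$ (and the finite-dimensional gauge term) so as to annihilate the non-decaying zero modes, producing a correction in $e^{-\alpha t_*}\CIb$. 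The tame estimates required for convergence follow from the high-energy resolvent bounds, and the limit yields the global solution $(g,A)=(g_b+\wt g,\,A_b+\wt A)$ with $\wt g,\wt A\in e^{-\alpha t_*}\CIb$. The implicit-function structure of the scheme, together with the continuity of the resolvent and of the parameter map, then delivers the continuous dependence of $b$, $\wt g$, and $\wt F$ on the initial data $(h,k,\bfE,\bfB)$.
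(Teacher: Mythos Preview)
The present paper does not prove this theorem at all: it is quoted verbatim from \cite[Theorem~9.2 and Remark~9.3]{HintzKNdSStability}, with only the remark that continuous dependence follows from the Nash--Moser scheme used there. Your outline is an accurate high-level summary of the strategy of that reference (DeTurck/Lorenz gauge with constraint damping, Vasy's microlocal Fredholm framework with radial-point and trapping estimates, Kodama--Ishibashi mode stability at $\sigma=0$, Nash--Moser with parameter adjustment to kill the zero modes), so there is nothing to compare against in the paper itself; for the purposes of this paper the theorem is a black box, and you should treat it as such rather than reproving it.
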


Here, $\CIb$ denotes bounded functions (or sections of a vector bundle) which are bounded together will all their coordinate derivatives, using the embedding $M\hra\R^4$. The continuous dependence is not explicitly stated in the reference, but is automatic for the Nash--Moser iteration scheme employed there, see also \cite[Theorem~11.2]{HintzVasyKdSStability}. Since we do not track the number of derivatives on the data which need to be controlled in order to control a fixed $\cC^m$ norm of $(\wt g,\wt A)$, we work with $\cC^m$ spaces of high but unspecified regularity $m\in\N$ in Theorem~\ref{ThmIntroUniq} and in our arguments below; in principle however one could specify a numerical value for $m$ by a careful study of the proof of the Nash--Moser iteration scheme.

In order to prove Theorem~\ref{ThmIntroUniq}, we use this non-linear stability result as follows: If in Theorem~\ref{ThmIntroStab} we take as data the initial data of a stationary solution $(g,F)$ which are close to non-degenerate RNdS data, then the solution $(g',F')$ given by Theorem~\ref{ThmIntroStab} asymptotes to a stationary slowly rotating KNdS solution $(g_b,F_b)$; however, $(g',F')$ itself is stationary --- since it is the same as $(g,F)$, up to diffeomorphism equivalence --- and hence must in fact be equal to $(g_b,F_b)$, again up to diffeomorphism equivalence. We implement this idea in \S\S\ref{SecExt} and \ref{SecPf}.

\subsection*{Acknowledgments}

I am grateful to Andr\'as Vasy and Maciej Zworski for useful discussions, and to the Miller Institute at the University of California, Berkeley, for support.

\section{Extension of Killing vector fields}
\label{SecExt}

Given a solution $(g,F)$ of equation~\eqref{EqIntroEM} on a globally hyperbolic spacetime $(M,g)$ with Cauchy surface $\Sigma\subset M$, we show:

\begin{prop}
\label{PropExt}
  Let $\cU$ be an open neighborhood of $\Sigma$, and suppose that $V_0\in\cV(\cU)$ is a smooth vector field such that $\cL_{V_0} g=0$ and $\cL_{V_0} F=0$ in $\cU$. Then there exists a unique global extension $V\in\cV(M)$ of $V_0$ such that $\cL_V g=0$ and $\cL_V F=0$.
\end{prop}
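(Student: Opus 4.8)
The plan is to propagate the Killing property using the fact that a Killing vector field satisfies a second-order hyperbolic (wave-type) equation, for which the initial value problem is well-posed on the globally hyperbolic spacetime $(M,g)$, and then to verify \emph{a posteriori} that the solution of that wave equation is genuinely Killing by a uniqueness/propagation-of-identities argument. Concretely, first I would recall that any vector field $V$ with $\cL_V g=0$ satisfies $\Box_g V + \Ric(g)\cdot V=0$ (equivalently, viewing $V^\flat$ as a $1$-form, $(\Box_g - \Ric)V^\flat=0$ when $\cL_V g=0$), as follows from commuting derivatives in $\nabla_\mu\nabla_\nu V_\lambda$ and using the second Bianchi identity. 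Since this is a linear wave equation with smooth coefficients on the globally hyperbolic manifold $M$, and since on the open neighborhood $\cU\supset\Sigma$ we already have the Killing field $V_0$, I would pose the Cauchy problem for this wave equation with Cauchy data on $\Sigma$ given by $V_0|_\Sigma$ and $\nabla_N V_0|_\Sigma$ (with $N$ the future unit normal), obtaining a unique global smooth solution $V\in\cV(M)$ which, by finite speed of propagation and uniqueness, restricts to $V_0$ on the Cauchy development of $\Sigma$ inside $\cU$, hence agrees with $V_0$ near $\Sigma$.

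The substance of the argument is then to show that this $V$ satisfies $\cL_V g=0$ and $\cL_V F=0$ everywhere, not merely near $\Sigma$. For the metric part, set $w=\cL_V g$, a symmetric $2$-tensor; using the wave equation for $V$ together with the Einstein--Maxwell equations \eqref{EqIntroEM} one derives a closed linear \emph{homogeneous} second-order hyperbolic system for $w$ (schematically $\Box_g w = \mathfrak{L}(w,\nabla w) + (\text{terms involving } \cL_V F)$, where $\mathfrak{L}$ has smooth coefficients), exploiting that $\cL_V$ commutes with $\Ric$ up to curvature terms and that the stress-energy tensor $T(g,F)$ is built algebraically from $g$ and $F$. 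For the electromagnetic part, set $\omega=\cL_V F$; since $\cL_V$ commutes with $d$, one gets $d\omega=0$ immediately, and commuting $\cL_V$ through $\delta_g F=0$ produces $\delta_g\omega = (\text{first-order terms in } w)$, which together with $d\omega=0$ gives a hyperbolic equation $\Box_g\omega=\mathfrak{M}(w,\nabla w,\omega,\nabla\omega)$ via the Hodge--de Rham/Weitzenböck identity $\Box_g=(d\delta_g+\delta_g d)$ up to curvature. Thus $(w,\omega)$ solves a coupled linear homogeneous hyperbolic system. The crucial input is that the Cauchy data of $(w,\omega)$ on $\Sigma$ vanish: $w|_\Sigma=0$ and $\omega|_\Sigma=0$ because $V=V_0$ near $\Sigma$ and $V_0$ is Killing there, while $\nabla w|_\Sigma=0$ and $\nabla\omega|_\Sigma=0$ for the same reason. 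By uniqueness for the linear hyperbolic system $(w,\omega)=(0,0)$ globally on $M$, which is exactly $\cL_V g=0$, $\cL_V F=0$.

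Uniqueness of the extension is then immediate: if $V'$ is another global Killing extension, then $V-V'$ is Killing and vanishes on $\cU\supset\Sigma$, hence its Cauchy data on $\Sigma$ vanish; as it solves the same wave equation $\Box_g(V-V')+\Ric\cdot(V-V')=0$, uniqueness for that Cauchy problem forces $V=V'$ on all of $M$.

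I expect the main obstacle to be the bookkeeping in the second step: one must carefully check that the system for $(w,\omega)$ genuinely closes — i.e. that no uncontrolled derivatives of $V$ (beyond those already encoded in $w$ and $\omega$) survive after commuting $\cL_V$ through the Einstein--Maxwell equations and the wave equation for $V$ — and that the resulting coefficients are smooth, which relies on the smoothness of $(g,F)$ and the algebraic (rather than differential) dependence of $T(g,F)$ on $F$. The underlying principle is classical (extension of Killing vectors across Cauchy developments goes back to work on the characterization of isometries, cf. the vacuum case), but here one must handle the Maxwell field simultaneously; the Lie-derivative-commutes-with-$d$ identity makes the $d\omega=0$ half trivial and is what keeps the electromagnetic part under control.
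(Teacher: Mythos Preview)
Your proposal is correct and follows essentially the same route as the paper: extend $V$ by solving the wave equation $(\Box_g-\Ric(g))V=0$ with data $V_0$ near $\Sigma$, then show that $\pi:=\cL_V g$ and $\varphi:=\cL_V F$ satisfy a coupled linear hyperbolic system with vanishing Cauchy data, forcing $\pi\equiv 0$, $\varphi\equiv 0$; uniqueness is argued exactly as you do. The paper makes two points explicit that you only gesture at under ``bookkeeping'': first, the mechanism by which the system closes is the identity $\delta_g G_g\pi=(\Box_g-\Ric(g))V$, so the wave equation for $V$ kills precisely the $\delta_g^*\delta_g G_g$ term in $D_g\Ric$ and leaves a genuine wave equation for $\pi$; second, the derivative count in the coupled system is asymmetric---the equation for $\varphi$ involves \emph{second} derivatives of $\pi$ (your schematic $\mathfrak{M}(w,\nabla w,\omega,\nabla\omega)$ understates this), while $\varphi$ enters the $\pi$-equation only algebraically, so energy estimates close by placing $\pi$ in $H^s$ and $\varphi$ in $H^{s-1}$.
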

\begin{proof}
  Let $V$ be any vector field, and define the 2-tensor, resp.\ 2-form
  \[
    \pi := \cL_V g, \quad \varphi := \cL_V F.
  \]
  Denote by $\phi_s$ the time $s$ flow of $V$. Due to the diffeomorphism invariance of the Einstein--Maxwell system, we have
  \begin{align*}
    0 &= \frac{d}{ds}\bigl(\Ric(\phi_s^*g)+\Lambda\phi_s^*g-2T(\phi_s^*g,\phi_s^*F)\bigr)\big|_{s=0} \\
      &= D_g\Ric(\pi) + \Lambda\pi - 2 D_g T(\pi,F) - 2 D_F T(g,\varphi).
  \end{align*}
  We then recall from \cite[equation~(2.4)]{GrahamLeeConformalEinstein} the formula $D_g\Ric=\frac{1}{2}\Box_g - \delta_g^*\delta_g G_g + \sR_g$, where $G_g u=u-\frac{1}{2}g\tr_g u$ is the trace reversal, $(\delta_g^*w)_{\mu\nu}=\frac{1}{2}(w_{\mu;\nu}+w_{\nu;\mu})$ the symmetric gradient, and
  \[
    (\sR_g h)_{\mu\nu} = R_{\mu\kappa\lambda\nu}h^{\kappa\lambda} + \frac{1}{2}(\Ric_{\mu\kappa}h_\nu{}^\kappa + \Ric_{\nu\kappa}h_\mu{}^\kappa)
  \]
  is a zeroth order operator. Aiming to derive a wave equation for $\pi$, we note that
  \[
    \delta_g G_g\pi = \delta_g G_g\cL_V g=2\delta_g G_g\delta_g^* V = (\Box_g - \Ric(g))V.
  \]
  Thus, if $V$ satisfies the equation
  \begin{equation}
  \label{EqExtVWave}
    (\Box_g - \Ric(g))V = 0,
  \end{equation}
  then $\delta_g G_g\pi\equiv 0$, hence $\pi$ satisfies the wave equation
  \begin{equation}
  \label{EqExtPiWave}
    \bigl(\Box_g + 2\Lambda + 2\sR_g - 4 D_g T(\cdot,F)\bigr)(\pi) = 4 D_F T(g,\varphi).
  \end{equation}
  (Conversely, if $V$ is a Killing extension of $V_0$, then $\pi=0$, hence $V$ satisfies equation~\eqref{EqExtVWave}, which proves the uniqueness of the extension $V$.)

  Next, we observe that $d\phi_s^*F\equiv 0$, hence $d\varphi=0$, and
  \[
    0 = \frac{d}{ds}(\delta_{\phi_s^*g}\phi_s^*F)|_{s=0} = \delta_g\varphi + (D_g\delta_{(\cdot)}(\pi))(F),
  \]
  hence, we have the wave equation
  \begin{equation}
  \label{EqExtPhiWave}
    (d\delta_g+\delta_g d)\varphi = -d(D_g\delta_{(\cdot)}(\pi))(F),
  \end{equation}
  for any choice of $V$. Note that while up to second derivatives of $\pi$ appear in this equation, $\varphi$ in \eqref{EqExtPiWave} appears only undifferentiated. Thus, energy estimates do apply to the coupled system \eqref{EqExtPiWave}--\eqref{EqExtPhiWave} of wave equations: One estimates $\pi$ in $H^s$ and $\varphi$ in $H^{s-1}$. (See the discussion around \cite[equation~(8.42)]{TaylorPDE}.)

  To prove the existence of $V$, we solve the wave equation \eqref{EqExtVWave} with $V=V_0$ near $\Sigma$; since we have $\pi=0$ and $\varphi=0$ near $\Sigma$, and since $\pi$ and $\varphi$ solve the coupled system \eqref{EqExtPiWave}--\eqref{EqExtPhiWave} of linear wave equations, we conclude that $\pi\equiv 0$ and $\varphi\equiv 0$ on $M$.
\end{proof}

\section{Proof of the main theorem}
\label{SecPf}

Let $D:=(h,k,\bfE,\bfB)$ denote the initial data induced on $\Sigma$ by the given smooth stationary solution $(g,F)$. By assumption, $D$ is close (in a high regularity $\cC^m$ norm) to the data $D_{b_0}$ induced on $\Sigma$ by the RNdS solution $(g_{b_0},F_{b_0})$. Let $(g',F')$ denote the forward solution of the Einstein--Maxwell system, with initial data $D$ on $\Sigma$, given by Theorem~\ref{ThmIntroStab}, thus there exist black hole parameters $b$ with
\begin{equation}
\label{EqPfExpDec}
  \wt g:=g'-g_b \in e^{-\alpha t_*}\CIb(M;S^2 T^*M),\ \ \wt F:=F'-F_b \in e^{-\alpha t_*}\CIb(M;T^*M).
\end{equation}
By uniqueness of solutions to the initial value problem, $(g',F')$ and $(g,F)$ represent the same solution. More precisely:
\begin{lemma}
\label{LemmaPfSame}
  Fix $m\in\N$, then there exists $m'\in\N$ such that the following holds: If $D$ is sufficiently close to $D_{b_0}$ in $\cC^{m'}$, then there exist a neighborhood $\cU$ of $\Sigma_{4\rho}$ in $M$ and a diffeomorphism $\phi\colon\cU\to\phi(\cU)$ with $\phi|_{\Sigma_{4\rho}}=\Id$ and such that $\phi^*g'=g$ and $\phi^*F'=F$. Embedding $M\hra\R^4$ by means of $(t_*,r,\omega)\mapsto(t_*,r\omega)\in\R\times\R^3$, the smooth function $\phi-\Id$ is small in $\cC^m(\cU;\R^4)$.

  Moreover, the Killing vector field $V=\phi_*(\pa_{t_*})$ for $(g',F')$, defined in $\phi(\cU)$, takes the form $V=(1+f)\pa_{t_*}+V_1$, with the smooth function $f$ and the smooth vector field $V_1$ small in $\cC^m(\phi(\cU))$.
\end{lemma}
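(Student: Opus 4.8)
The plan is to invoke the uniqueness theorem for the characteristic/Cauchy initial value problem of the Einstein–Maxwell system in a suitable gauge, and then track the regularity and the resulting diffeomorphism carefully. First I would recall that the nonlinear stability construction in \cite{HintzKNdSStability} produces $(g',F')$ as the solution of a gauge-fixed (generalized wave coordinate and Lorenz gauge) hyperbolic system with initial data $D$ on $\Sigma$. The given stationary solution $(g,F)$ also solves the Einstein–Maxwell system with the same data $D$, so after applying a gauge transformation to $(g,F)$ — i.e.\ solving a wave-map type equation for a diffeomorphism $\phi_0$ near $\Sigma$ with $\phi_0|_\Sigma=\Id$ and prescribed normal derivative determined by $D$, so that $\phi_0^*(g,F)$ satisfies the same gauge conditions — both $\phi_0^*(g,F)$ and $(g',F')$ solve the \emph{same} quasilinear hyperbolic system with the \emph{same} Cauchy data. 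By finite speed of propagation and uniqueness for such systems, they agree on a neighborhood $\cU$ of $\Sigma_{4\rho}$ (shrinking slightly to stay within the domain of determinacy). Setting $\phi := \phi_0^{-1}$ gives a diffeomorphism with $\phi|_{\Sigma_{4\rho}}=\Id$, $\phi^*g'=g$, $\phi^*F'=F$.

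**Quantitative control.** Next I would address the $\cC^m$ smallness. The diffeomorphism $\phi_0$ is obtained by solving a wave equation whose coefficients and source depend on $(g,F)$ and hence, through $D$, on the data; since $D$ is close to $D_{b_0}$ in $\cC^{m'}$ and $(g_{b_0},F_{b_0})$ is itself in the chosen gauge (so that for $D=D_{b_0}$ one gets $\phi_0=\Id$), continuous dependence of solutions of the wave-map equation on the data gives that $\phi_0-\Id$, and therefore $\phi-\Id$, is small in $\cC^m(\cU;\R^4)$ provided $m'$ is taken large enough relative to $m$ (losing a fixed finite number of derivatives in the energy estimates and in the passage from $H^{m'}$-type control to pointwise $\cC^m$ control via Sobolev embedding). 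This is where the quantitative bookkeeping is concentrated; it is routine given the standard theory of symmetric hyperbolic / wave-map systems, but the dependence must be stated as "there exists $m'$" precisely because the number of lost derivatives is not tracked explicitly.

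**The Killing field.** Finally, for the vector field statement: since $\cL_{\pa_{t_*}}(g,F)=0$ and $\phi^*(g',F')=(g,F)$ on $\cU$, the pushforward $V:=\phi_*(\pa_{t_*})$ satisfies $\cL_V g'=0$ and $\cL_V F'=0$ on $\phi(\cU)$, i.e.\ $V$ is a Killing field for $(g',F')$ there. Writing $\phi=\Id+(\phi-\Id)$ and differentiating, $V=\phi_*\pa_{t_*}=(D\phi\circ\phi^{-1})\cdot\pa_{t_*}$, so $V-\pa_{t_*}$ is an expression built from first derivatives of $\phi-\Id$ composed with $\phi^{-1}$; since $\phi-\Id$ is $\cC^{m+1}$-small (take $m'$ one larger again), $V-\pa_{t_*}$ is $\cC^m$-small. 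Decomposing $V-\pa_{t_*}$ into its $\pa_{t_*}$-component $f\,\pa_{t_*}$ and the remaining part $V_1$ tangent to the $\{t_*=\mathrm{const}\}$-slicing's $(r,\omega)$-directions gives the claimed form $V=(1+f)\pa_{t_*}+V_1$ with $f,V_1$ small in $\cC^m(\phi(\cU))$.

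**Main obstacle.** The principal subtlety is not any single estimate but the gauge-theoretic step: one must choose the gauge for $(g,F)$ to match \emph{exactly} the gauge implicit in the construction of $(g',F')$ in \cite{HintzKNdSStability} — including the choice of background/reference metric in the generalized wave coordinate gauge and the Lorenz gauge for the Maxwell potential — so that the two solutions genuinely solve the identical hyperbolic system and the uniqueness theorem applies verbatim. Getting this alignment right, and verifying that the gauge-fixing diffeomorphism $\phi_0$ is itself uniquely and smoothly solvable with the asserted quantitative dependence on $D$, is the crux; once that is in place, finite speed of propagation, standard hyperbolic uniqueness, and Sobolev embedding finish the argument.
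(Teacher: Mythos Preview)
Your approach is correct and follows the same underlying strategy as the paper---reduce to uniqueness for a gauge-fixed hyperbolic system---but the implementation differs in an instructive way. The paper does \emph{not} attempt to match the specific DeTurck/Lorenz gauge used in \cite{HintzKNdSStability}; instead it takes $g$ itself as the target of the wave map. Concretely, the paper first uses the normal exponential maps of $g$ and $g'$ to build a preliminary diffeomorphism $\phi_0$ (Gaussian normal coordinates for each, then identify) ensuring that $g$ and $\phi_0^*g'$ share the same $1$-jet on $\Sigma_{2\rho}$; it then solves the wave map equation $\Box_{\phi_0^*g',\,g}\phi^1=0$ with trivial Cauchy data and sets $\phi=\phi_0\circ(\phi^1)^{-1}$. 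Since the identity is trivially a wave map $(\cU,g)\to(\cU,g)$, both $(g,F)$ and $(\phi^*g',\phi^*F')$ solve Einstein--Maxwell in wave map gauge with background $g$ and identical Cauchy data, and local uniqueness (as in \cite[\S2.2]{HintzKNdSStability}) finishes. This choice of background sidesteps precisely the obstacle you flag: there is no need to align with the gauge of \cite{HintzKNdSStability}, with its constraint-damping modifications and its formulation in terms of a potential $A$ rather than $F$, because the comparison gauge is manufactured from $g$ on the spot and $g$ is automatically in gauge with respect to itself. Your route works as well, but requires checking that $(g',F')$ genuinely sits in the asserted gauge with the asserted Cauchy data and that gauge-fixing $(g,F)$ reproduces exactly those data---doable, but more bookkeeping than necessary. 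Your treatment of the Killing field $V=\phi_*\pa_{t_*}$ and of the $\cC^m$-smallness via continuous dependence matches the paper's.
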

\begin{proof}
  Denote by $N$ and $N'$ the future unit normal vector fields on $\Sigma$ with respect to $g$ and $g'$, respectively. For small $\delta>0$, the normal exponential map $\Sigma_\rho\times(-\delta,\delta)\ni(x,s)\mapsto\exp^g_x(s N)$ (defined with respect to the metric $g$) is a diffeomorphism onto its image; pulling $g$ back by this map, we have
  \[
    g = \begin{pmatrix} 1 & 0 \\ 0 & -h \end{pmatrix},
  \]
  and at $\Sigma$
  \[
    k = -\frac{1}{2}\pa_s h|_{s=0}, \quad F = \bfE \wedge ds - \star_h \bfB.
  \]
  We can perform the same construction for the metric $g'$ and the map $(x',s')\mapsto\exp_{x'}^{g'}(s' N')$. Define the diffeomorphism $\phi_0\colon(x,s)\mapsto(x',s')$ by $x'=x$, $s'=s$; this is well-defined in a neighborhood of $\Sigma_{2\rho}$. Since $g$ and $g'$ induce the same initial data pointwise on $\Sigma$, $g$ and $g'_0:=\phi_0^*g'$, defined near $\Sigma_{2\rho}$, have the same 1-jet on $\Sigma_{2\rho}$. By the same token, $F$ and $F'_0:=\phi_0^*F'$ agree on $\Sigma_{2\rho}$; in fact, since they satisfy Maxwell's equations with respect to the metrics $g$ and $g'_0$, respectively, their 1-jets agree.
  
  Define now the map $\phi^1$ to be the solution of the (semilinear) wave map equation
  \[
    \Box_{g'_0,g}\phi^1 = 0,
  \]
  with trivial initial data, i.e.\ $\phi^1|_{\Sigma_{3\rho}}=\Id$ and $D\phi^1|_{\Sigma_{3\rho}}=\Id_{T\Sigma_{3\rho}}$. (See \cite[Remark~2.1]{HintzVasyKdSStability} for the notation.) On a neighborhood of $\Sigma_{3\rho}$, $\phi^1$ is a diffeomorphism onto its image, which we denote by $\cU$. Let $\phi_1=(\phi^1)^{-1}$ and $\phi=\phi_0\circ\phi_1$, then the identity map $\Id_\cU\colon\cU\to\cU$ induces a wave map $(\cU,\phi^*g')=(\cU,\phi_1^*g'_0)\to(\cU,g)$; of course, $\Id_\cU$ also induces a wave map $(\cU,g)\to(\cU,g)$. By the local (pointwise!) uniqueness for the  Einstein--Maxwell system in this wave map gauge, i.e.\ with $g$ as a background metric (see \cite[\S2.2]{HintzKNdSStability}), we conclude that $(g,F)=\phi^*(g',F')$ near $\Sigma_{4\rho}$, as desired.

  The smallness claims follow from the continuity part of Theorem~\ref{ThmIntroStab}, together with the observation that the above construction in the case $(g',F')=(g,F)$ produces the map $\phi\equiv\Id$.
\end{proof}

Since the Killing vector field $V$ on $(g',F')$ is in general \emph{not} equal to $\pa_{t_*}$, the global existence of $V$ as a Killing field is not automatic at this point; but using Proposition~\ref{PropExt} for $(g',F')$, i.e.\ solving the wave equation
\begin{equation}
\label{EqPfExtVWave}
  (\Box_{g'}-\Ric(g'))V = 0,
\end{equation}
we can extend $V$ from a neighborhood of $\Sigma_{4\rho}$ to a Killing vector field satisfying $\cL_V F'=0$. Here, we assume that the future domain of dependence of $\Sigma_{4\rho}$ with respect to $g'$ contains a neighborhood $M_{5\rho}$ of the black hole exterior region, which holds for sufficiently small perturbations of the RNdS data $D_{b_0}$; the vector field $V$ is then defined in $M_{5\rho}$.

In what follows, the exponential decay rate $\alpha>0$ will be made smaller as necessary, hence may change from line to line.

\begin{lemma}
\label{LemmaPfKillingAsymp}
  The vector field $V$ takes the form
  \begin{equation}
  \label{EqPfKilling}
    V = C\pa_{t_*} + R + \wt V,
  \end{equation}
  where $C>0$ is a constant, $R\in\cV(\Sph^2)\subset\cV(M_{5\rho})$ is a rotational Killing field, and $\wt V\in e^{-\alpha t_*}\CIb(M_{5\rho};T M_{5\rho})$. Moreover, $|C-1|$ is small, $R$ is a small rotation, and $\wt V$ has small norm in $e^{-\alpha t_*}\cC^0(M_{5\rho};T M_{5\rho})$.
\end{lemma}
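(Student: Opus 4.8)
I would view the wave equation \eqref{EqPfExtVWave} as a perturbation, by an operator with exponentially decaying coefficients, of the corresponding stationary equation on the model KNdS spacetime; read off the asymptotics of $V$ from the resonance structure of that stationary operator; and then use the Killing equation $\cL_V g'=0$ (and $\cL_V F'=0$) to pin down the leading term as a genuine Killing field of $g_b$, which for a slowly rotating KNdS metric is forced into the span of $\pa_{t_*}$ and the axial rotation.

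\textbf{Reduction and asymptotic expansion.} Writing $g'=g_b+\wt g$ with $\wt g\in e^{-\alpha t_*}\CIb$ as in \eqref{EqPfExpDec}, and using that $\pa_{t_*}$ is Killing for $g_b$, hence $(\Box_{g_b}-\Ric(g_b))\pa_{t_*}=0$, I would write
\[
  P_{g'} := \Box_{g'}-\Ric(g') = \bigl(\Box_{g_b}-\Ric(g_b)\bigr) + E,
\]
where $E$ is a second order operator whose coefficients lie in $e^{-\alpha t_*}\CIb$ (they are built from $\wt g$ and its first two derivatives). Thus $V$ solves $P_{g'}V=0$ with data near $\Sigma_{4\rho}$ of the form $\pa_{t_*}+V'$ with $V'$ small, by Lemma~\ref{LemmaPfSame}. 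This is exactly the situation handled by the b-analysis underlying \cite{HintzKNdSStability}: the stationary operator $\Box_{g_b}-\Ric(g_b)$ on vector fields (equivalently on one-forms, where it is $2\delta_{g_b}G_{g_b}\delta_{g_b}^*$) is the operator governing propagation of the wave-coordinate gauge, and its mode stability --- no resonances in $\Im\sigma>0$, and on the real axis only a simple resonance at $\sigma=0$ with a finite-dimensional space $\cK_{g_b}$ of stationary resonant states containing the Killing vector fields of $g_b$ --- is part of the linear input there. Feeding in the exponentially decaying perturbation $E$ and the given data, the forward solution $V$ then admits an expansion
\[
  V = V_0 + \wt V, \qquad V_0 \in \cK_{g_b}, \quad \wt V \in e^{-\alpha t_*}\CIb(M_{5\rho};TM_{5\rho}),
\]
with no polynomially growing term because $\sigma=0$ is a simple resonance; moreover $V_0$ and $\wt V$ depend continuously on the Einstein--Maxwell data $D$.

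\textbf{Identification of $V_0$ and smallness.} Since $V_0$ is stationary ($\cL_{\pa_{t_*}}V_0=0$, being a resonant state at frequency $0$) and $g_b$ is stationary, $\cL_{V_0}g_b$ is independent of $t_*$. On the other hand, from $\cL_V g'=0$ (Proposition~\ref{PropExt}) together with $V=V_0+\wt V$ and $g'=g_b+\wt g$,
\[
  \cL_{V_0}g_b = -\cL_{V_0}\wt g - \cL_{\wt V}g_b - \cL_{\wt V}\wt g \in e^{-\alpha t_*}\CIb,
\]
since $V_0\in\CIb$ while $\wt V$, $\wt g$ and their derivatives decay exponentially. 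A $t_*$-independent tensor which is exponentially decaying vanishes, so $\cL_{V_0}g_b=0$: whatever $\cK_{g_b}$ is, $V_0$ is a genuine Killing vector field of $g_b$ (the same computation with $F'$ gives $\cL_{V_0}F_b=0$, consistent but not needed). For a slowly rotating KNdS metric the Killing algebra is spanned by $\pa_{t_*}$ and the rotation $R_\bfa$ about the axis $\bfa$, so $V_0 = C\pa_{t_*}+R$ with $C\in\R$ and $R\in\cV(\Sph^2)$ a rotational Killing field, giving \eqref{EqPfKilling}. For the smallness: when $D=D_{b_0}$ the construction yields $(g',F')=(g_{b_0},F_{b_0})$, $\phi=\Id$ in Lemma~\ref{LemmaPfSame}, hence $V=\pa_{t_*}$, $C=1$, $R=0$, $\wt V=0$; by continuity of every step in $D$ (Theorem~\ref{ThmIntroStab}, Lemma~\ref{LemmaPfSame}, and the expansion above), for $D$ close to $D_{b_0}$ one gets $|C-1|$ small, in particular $C>0$, $R$ a small rotation, and $\wt V$ small in $e^{-\alpha t_*}\cC^0(M_{5\rho};TM_{5\rho})$.

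\textbf{Main obstacle.} The crux is the clean expansion $V=V_0+\wt V$ with no growing terms: this rests on the mode stability of $\Box_{g_b}-\Ric(g_b)$ (no resonances in the closed upper half plane off $\sigma=0$) and the non-degeneracy of its zero resonance, which is precisely the delicate linear analysis behind \cite{HintzKNdSStability}, and one must check that the exponentially decaying perturbation $E$ and the finite-regularity data fit the b-framework used there. If one instead wanted boundedness of $V$ as the starting point, one can note that $W:=\cL_{\pa_{t_*}}V$ solves $P_{g'}W=[P_{g'},\cL_{\pa_{t_*}}]V$, whose right-hand side has exponentially decaying coefficients (the commutator only sees $\cL_{\pa_{t_*}}\wt g$) and small data near $\Sigma_{4\rho}$, and then recover $V_0(x)=V(0,x)+\int_0^\infty W(s,x)\,ds$ by integration once $W$ is known to decay; but concluding that $W$ decays (rather than converging to a nonzero Killing field, which would make $V$ grow linearly) again requires the same resonance structure.
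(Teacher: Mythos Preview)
Your argument is sound once the mode stability input is granted, and your identification step --- forcing $\cL_{V_0}g_b=0$ from $\cL_V g'=0$ together with the exponential decay of $\wt g$ and $\wt V$ --- is exactly the mechanism the paper uses. But the paper takes a different route that sidesteps the obstacle you flag. Rather than asserting that $\Box_{g_b}-\Ric(g_b)$ has no resonances in $\Im\sigma\geq 0$ except a simple one at $0$ --- which is \emph{not} what is established in \cite{HintzKNdSStability}: the constraint propagation operator there carries constraint damping modifications, and the bare operator $2\delta_{g_b}G_{g_b}\delta_{g_b}^*$ is not the one whose spectrum is analyzed --- the paper invokes only the general resonance expansion of \cite{HintzPsdoInner}, allowing $V=\sum_j V_j+\wt V$ with arbitrary $\Im\sigma_j\geq 0$ and polynomial-in-$t_*$ factors. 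It then runs your identification argument on \emph{each} generalized mode: since $\delta_{g'}^*V=0$ and $g'-g_b$ is exponentially decaying, one gets $\delta_{g_b}^*V_j=0$ for every $j$, hence every $V_j$ is a Killing field of $g_b$; but those are all stationary ($\pa_{t_*}$ and rotations), forcing $\sigma_j=0$ and killing any polynomial growth. In the merely conormal case the paper iterates this, starting from the crude energy bound $V\in e^{\ell t_*}\CIb$ and peeling off strips of width $\alpha$ until $\ell\leq 0$. The upshot is that the Killing equation, which you use only at the end to single out $V_0$ within $\cK_{g_b}$, is strong enough to replace the spectral input entirely: it rules out growing modes and Jordan blocks directly, with no need to know anything about the resonances of the wave operator beyond the existence of the expansion. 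Your approach would work if the mode stability you invoke were established separately, but as written it rests on an input that is neither proved in the cited references nor needed.
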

\begin{proof}
  Assume first that $g'$ is stationary, or more generally a smooth b-metric on the compactification of $M_{5\rho}$ defined by letting $\tau=e^{-t_*}$, and adding $\tau=0$ as a boundary at future infinity; equivalently, assume that the coefficients of $g'$ (and thus of $\wt g$ in \eqref{EqPfExpDec}), as a metric on $\R^4$, extend to smooth functions of $(\tau,r,\omega)$ down to $\tau=0$. Since equation~\eqref{EqPfExtVWave} agrees with the tensor wave operator on 1-forms up to sub-sub\-prin\-ci\-pal terms, the main result of \cite{HintzPsdoInner} applies, showing that $V$ has an asymptotic expansion
  \begin{equation}
  \label{EqPfResExp}
    V(t_*,x) = \sum_{j=1}^N V_j(t_*,x) + \wt V(t,x), \ \ V_j(t_*,x)=\sum_{l=0}^{d_j} e^{-i\sigma_j t_*}t_*^l V_{j l}(x),
  \end{equation}
  where $N\in\Z_{\geq 0}$, the $\sigma_j\in\C$ are pairwise distinct resonances, $\Im\sigma_j\geq 0$, $V_{jk}\in\CI(\Sigma_{5\rho};T_{\Sigma_{5\rho}}M_{5\rho})$, $V_{j d_j}\not\equiv 0$, and $\wt V\in e^{-\alpha t_*}\CIb(M_{5\rho};T M_{5\rho})$; here $\alpha>0$ is a fixed small number. Since $g'$ is stationary up to a remainder which is exponentially decaying in $t_*$, the Killing equation $\cL_V g'=2\delta_{g'}^* V=0$ implies that for each $j=1,\ldots,N$, the generalized mode $V_j$ satisfies the Killing equation $\delta_{g_b}^*V_j=0$. However, the only Killing vector fields of $g_b$ are $\pa_{t_*}$ and $\pa_\phi$ if the black hole with parameters $b$ has non-zero angular momentum, and all rotational vector fields on $\Sph^2$ for non-rotating black holes. This forces $\sigma_j=0$ and $d_j=1$, hence we conclude that \eqref{EqPfKilling} holds for some $C\in\R$.

  Observe now that if the initial data were the ones induced by the RNdS solution $(g_{b_0},F_{b_0})$, then we would have $(g',F')=(g_{b_0},F_{b_0})$, the local diffeomorphism $\phi$ constructed in Lemma~\ref{LemmaPfSame} would be the identity, and the Killing vector field $V$ would be $V=\pa_{t_*}$, so $C=1$, $R=0$, and $\wt V=0$ in \eqref{EqPfKilling}. The smallness assertions then follow from the continuous dependence of the resonance expansion \eqref{EqPfResExp} on the coefficients of the operator in equation~\eqref{EqPfExtVWave} and on the initial data, see \cite[\S5.1.2]{HintzVasyKdSStability} for details.

  In general, we only have the regularity stated in~\eqref{EqPfExpDec} for $g'$, i.e.\ conormality instead of smoothness on the compactification of $M_{5\rho}$. In this case, one first notes that energy estimates immediately give $V\in e^{\ell t_*}L^2(M_{5\rho};T M_{5\rho})$ for some $\ell\gg 0$, and in fact $V\in e^{\ell t_*}\CIb$ by propagation of b-singularities, see \cite[Lemma~5.2]{HintzVasyQuasilinearKdS}. One then has a partial expansion of the form \eqref{EqPfResExp}, with $\ell-\alpha<\Im\sigma_j<\ell$ and a remainder $\wt V\in e^{(\ell-\alpha)t_*}\CIb$, see the proof of \cite[Theorem~5.6]{HintzVasyQuasilinearKdS}. As long as $\ell-\alpha>0$, the above arguments show that the partial expansion must vanish identically, implying an improved decay rate for $V=\wt V$. Iterating this argument, one obtains $V\in e^{\ell t_*}\CIb$ for all $\ell>0$, and in the next iterative step, one obtains the partial expansion \eqref{EqPfKilling}, as desired.
\end{proof}

Rescaling $V$ and redefining $R$ and $\wt V$, we thus have
\[
  V=\pa_{t_*}+R+\wt V.
\]
Denote by $\phi_s$ the time $s$ flow of $V$, and let $\phi_s^0$ be the time $s$ flow of $\pa_{t_*}+R$, so $\phi_s^0(t_*,x)=(t_*+s,e^{s R}x)$.

\begin{lemma}
\label{LemmaPfLongTime}
  Define the neighborhoods
  \[
    \cU_j=(r_0+(8-j)\rho,r_1-(8-j)\rho)\times\Sph^2 \subset\R^3, \quad j=1,2
  \]
  of $\ol{\Sigma_{9\rho}}$. If the data $D$ are sufficiently close to $D_{b_0}$ in a sufficiently high $\cC^m$ norm, then:
  \begin{enumerate}
  \item \label{ItPfLongTimeExists} The flow $\phi_s$ maps $[0,\infty)_{t_*}\times\cU_1\to[0,\infty)_{t_*}\times\cU_2$ for all $s\geq 0$.
  \item \label{ItPfLongTimeConv} Let $\cW=(\frac{1}{2},\frac{3}{2})_{t_*}\times\cU_1$, and write
  \[
    \phi_s(t_*,x)=\phi_s^0(\psi_s(t_*,x))
  \]
  for $(t_*,x)\in\cW$, $s\geq 0$. Then the limit
  \[
    \psi_\infty(t_*,x):=\lim_{s\to\infty}\psi_s(t_*,x)\in\R^4
  \]
  exists and defines a smooth function satisfying $\|\psi_\infty-\Id\|_{L^\infty}<\rho$. Moreover, the convergence is exponentially fast, that is, $\psi_s(t_*,x)-\psi_\infty(t_*,x)\in e^{-\alpha s}\cCb^0([0,\infty)_s;\CI(\cW;\R^4))$.
  \end{enumerate}
\end{lemma}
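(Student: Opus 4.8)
The plan is to compare the flow $\phi_s$ of $V=\pa_{t_*}+R+\wt V$ with the flow $\phi^0_s$ of $\pa_{t_*}+R$, using crucially that the remainder $\wt V\in e^{-\alpha t_*}\CIb$ becomes exponentially small \emph{in the flow parameter $s$}: since the $\pa_{t_*}$-component of $V$ is $1+\wt V^{t_*}$ with $\wt V^{t_*}$ small, the coordinate $t_*$ grows at almost unit speed along $\phi_s$, so that $e^{-\alpha t_*}$ behaves like $e^{-\alpha s}$ along the flow. Let $\delta>0$ bound $\|\wt V\|_{e^{-\alpha t_*}\cC^0(M_{5\rho})}$ together with the size of the rotation $R$; by Lemma~\ref{LemmaPfKillingAsymp} we may take $\delta$ as small as we please by requiring $D$ sufficiently close to $D_{b_0}$, while the full $e^{-\alpha t_*}\CIb$-norm of $\wt V$ stays bounded by a fixed constant. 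We also regard $R$ as the linear antisymmetric map on $\R^3$ generating the rotation $e^{sR}$, so that $\phi^0_s$ is the affine map $(t_*,x)\mapsto(t_*+s,e^{sR}x)$.

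For \itref{ItPfLongTimeExists} I would argue by continuation. Writing $\phi_s(p)=(T_s(p),X_s(p))\in\R\times\R^3$, the flow equations read $\dot T_s=1+\wt V^{t_*}(\phi_s)$ and $\dot X_s=R X_s+\wt V^x(\phi_s)$. From $|\wt V^{t_*}|\le\delta$ (valid where $t_*\ge 0$), the first equation gives $T_s(p)\ge t_*(p)+(1-\delta)s\ge 0$, so $t_*$ stays nonnegative and grows linearly, whence $|\wt V(\phi_s(p))|\lesssim\delta e^{-\alpha(1-\delta)s}$ as long as $\phi_s(p)\in M_{5\rho}$. Variation of parameters in the $X$-equation yields $X_s=e^{sR}X_0+e^{sR}\int_0^s e^{-\sigma R}\wt V^x(\phi_\sigma)\,d\sigma$; as $e^{sR}$ is orthogonal it preserves $r=|X_0|$, and the integral has norm $\le\int_0^\infty\delta e^{-\alpha(1-\delta)\sigma}\,d\sigma\lesssim\delta$ uniformly in $s$, so $|r(\phi_s(p))-r(p)|\lesssim\delta<\rho$ for $\delta$ small. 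Hence $\phi_s$ maps $[0,\infty)_{t_*}\times\cU_1$ into $[0,\infty)_{t_*}\times\cU_2$, which keeps the flow in a fixed compact subset of the domain of $V$; the escape-time criterion then gives global existence for $s\ge 0$. (This is a bootstrap, since the decay bound on $\wt V(\phi_s(p))$ presupposes $\phi_s(p)\in M_{5\rho}$; it closes because the resulting displacement estimates are uniform and small.)

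For \itref{ItPfLongTimeConv}, by \itref{ItPfLongTimeExists} the map $\psi_s=(\phi^0_s)^{-1}\circ\phi_s$ is defined on $\cW$ for all $s\ge 0$ with $\psi_0=\Id$, and a standard computation (differentiating $\psi_s=\phi^0_{-s}\circ\phi_s$ and using that $\pa_{t_*}+R$ is invariant under $\phi^0_s$) gives $\partial_s\psi_s=W_s(\psi_s)$ with the time-dependent vector field $W_s=(\phi^0_{-s})_*\wt V$; explicitly, $\psi_s^{t_*}(p)=t_*(p)+\int_0^s\wt V^{t_*}(\phi_\sigma(p))\,d\sigma$ and $\psi_s^x(p)=X_0(p)+\int_0^s e^{-\sigma R}\wt V^x(\phi_\sigma(p))\,d\sigma$. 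Since $\phi^0_{-s}$ merely shifts $t_*$ by $-s$ and rotates, and $\wt V\in e^{-\alpha t_*}\CIb$, one gets $\|W_s\|_{\cC^k(\{t_*\ge 1/4\})}\le C_k e^{-\alpha s}$ for each $k$, with $C_0\lesssim\delta$; as $\psi_s$ takes values in $\{t_*\ge 1/4\}$, the integrands above are dominated by $\delta e^{-\alpha(1-\delta)s}$, so $\psi_\infty:=\lim_{s\to\infty}\psi_s$ exists, $\psi_s-\psi_\infty=-\int_s^\infty W_\sigma(\psi_\sigma)\,d\sigma$ obeys $\|\psi_s-\psi_\infty\|_{\cC^0}\lesssim\delta e^{-\alpha s}$ (after shrinking $\alpha$), and $\|\psi_\infty-\Id\|_{L^\infty}\lesssim\delta<\rho$. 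Smoothness and the $\CI$-convergence follow by differentiating $\partial_s\psi_s=W_s(\psi_s)$ in $p$: the variational equation $\partial_s(D_p\psi_s)=DW_s(\psi_s)\,D_p\psi_s$ has $\|DW_s\|_{\cC^0}\lesssim e^{-\alpha s}$ integrable in $s$, so Gronwall bounds $D_p\psi_s$ uniformly in $s$, hence $\partial_s(D_p\psi_s)=O(e^{-\alpha s})$ and $D_p\psi_s$ converges exponentially; the equations for the higher derivatives $\partial^\beta_p\psi_s$ have the same structure, with source terms built from lower-order $\partial^{\beta'}_p\psi_s$ and from $\partial^\gamma W_s=O(e^{-\alpha s})$, so the argument iterates, yielding $\psi_\infty\in\CI(\cW;\R^4)$ and $\psi_s-\psi_\infty\in e^{-\alpha s}\cCb^0([0,\infty)_s;\CI(\cW;\R^4))$.

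The only genuinely substantive point, from which the rest is routine Gronwall and variation-of-parameters bookkeeping, is that $t_*$ grows like $s$ along $\phi_s$: this converts the spatial $e^{-\alpha t_*}$ decay of $\wt V$ into $e^{-\alpha s}$ decay in the flow time, which is exactly what makes every perturbation integral both convergent and small, and simultaneously confines $\phi_s$ to $\cU_2$. The unavoidable nuisances — $\alpha$ degrading slightly at each comparison, and the $\cC^k$-constants growing with $k$ — are harmless: the former because the text already permits shrinking $\alpha$, the latter because only $\cC^0$-smallness of $\psi_\infty-\Id$ is asserted, not $\cC^k$-smallness.
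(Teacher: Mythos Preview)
Your proof is correct and follows essentially the same route as the paper: use that $t_*$ grows linearly along $\phi_s$ to convert the $e^{-\alpha t_*}$ decay of $\wt V$ into $e^{-\alpha s}$ decay in the flow parameter, confine the $r$-coordinate for part~\itref{ItPfLongTimeExists}, and derive the ODE $\pa_s\psi_s=(\phi^0_{-s})_*\wt V(\psi_s)$ to get convergence in part~\itref{ItPfLongTimeConv}. The only cosmetic differences are that the paper estimates $|Vr|$ directly (since $R r=0$) rather than running variation of parameters on all of $X_s$, and states the uniform $\cC^k$ bound on $\psi_s$ as an a~priori fact before reading off $\pa_s\psi_s\in e^{-\alpha s}\cCb^0$, whereas you build it up by iterated Gronwall on the variational equations; both are equivalent.
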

\begin{proof}
  The estimates $|V t_*-1|\leq\eps e^{-\alpha t_*}<1/2$, $|V r|\leq\eps e^{-\alpha t_*}$, with $\eps>0$ small for $D$ close to $D_{b_0}$, imply that for the $r$-coordinate $r(\phi_s(t_*,x))$, we have
  \[
    |r(\phi_s(t_*,x))-r(\phi_0(t_*,x))| \leq \int_0^s \eps e^{-\alpha(s'/2)}\,ds' \leq \frac{2\eps}{\alpha}<\rho.
  \]
  This proves \itref{ItPfLongTimeExists}. Next, note that $\psi_s(t_*,x)=(\phi_s^0)^{-1}(\phi_s(t_*,x))$ is well-defined for all $s\geq 0$ by part \itref{ItPfLongTimeExists}. Estimates similar to the previous one show that, for $\eps$ small enough,
  \[
    \|\psi_s-\Id\|_{\cCb^0([0,\infty)_s;\cC^0(W;\R^4))}<\rho,
  \]
  and moreover
  \begin{equation}
  \label{EqPfLongTimePsiAPriori}
    \psi_s-\Id \in \cCb^0([0,\infty)_s;\cC^k(W;\R^4))
  \end{equation}
  for all $k\in\Z_{\geq 0}$. Decomposing $\wt V=\wt V^T\pa_{t_*} + \wt V^X$, with $\wt V^X$ a spatial vector field, and writing $\psi_s=(\psi_s^T,\psi_s^X)$ for the time and spatial components, respectively, $\psi_s$ satisfies the ODE
  \[
    \pa_s\psi_s = (\wt V^T,(e^{-s R})_*\wt V^X)(\psi_s^T+s,e^{s R}\psi_s^X)
  \]
  with initial condition $\psi_0=\Id$. In view of \eqref{EqPfLongTimePsiAPriori} and the decay properties of $\wt V$, this gives $\pa_s\psi_s\in e^{-\alpha s}\cCb^0([0,\infty)_s;\cC^k(W;\R^4))$ for all $k$, and hence the conclusion.
\end{proof}

By definition of $V$, we have $\phi_s^*(g',F')=(g',F')$ in $\cW$ for all $s$. Hence, taking the limit as $s\to\infty$ in the topology of $\CI(\cW)$, we have
\[
  (g',F') = \lim_{s\to\infty}\bigl(\phi_s^*(g_b,F_b) + \phi_s^*(\wt g,\wt F)\bigr) = \psi_\infty^*(g_b,F_b),
\]
where for the second equality we used that the pull-back by $\phi_s^0$ preserves $(g_b,F_b)$. The proof of Theorem~\ref{ThmIntroUniq} is complete.


\end{document}